\titlerunning{}
\author{Michał Fica}{University of Wroc\l{}aw}{}{}{} \author{Jan Otop}{University of Wroc\l{}aw}{}{}{}
\authorrunning{Michał Fica and Jan Otop}
\title{Active Automata Learning with Advice}
\keywords{active automata learning,
synthesis,
finite automata,
string rewriting,
Mealy machines,
verification}
\newcommand{\set}[1]{\{#1\}}
\newcommand{\aut}{\mathcal{A}}
\newcommand{\autB}{\mathcal{B}}
\newcommand{\PSPACE}{\textsc{PSpace}{}}
\newcommand{\lang}{\mathcal{L}}
\newcommand{\Paragraph}[1]{\noindent\textbf{#1.}}
\newcommand{\M}{\mathcal{M}}
\newcommand{\rightCongL}{\sim_{L}}
\newcommand{\rightCongF}{\sim_{f}}
\newcommand{\rewritesOneStep}{\rightarrow}
\newcommand{\rewrites}{\rewritesOneStep^*}
\newcommand{\normalForm}[1]{\textbf{NF}(#1)}
\newcommand{\normalFormSRS}[2]{\textbf{NF}_{#1}(#2)}
\newcommand{\rew}{\mathcal{R}}
\newcommand{\lStar}{L\ensuremath{^*}}
\newcommand{\bigO}{\mathcal{O}}
\newcommand{\Iff}{\leftrightarrow}
\newcommand{\learnlib}{\textbf{\texttt{learnlib}}}
\newcommand{\Ind}{\textsc{Ind}}
\newcommand{\rewConv}{\rew_{\textsc{Con}}}
\newcommand{\BibTeX}{B\kern-.05em{\sc i\kern-.025em b}\kern-.08em\TeX}
\begin{document}

\maketitle

\begin{abstract}
We present an extended automata learning framework that combines active automata learning with deductive inference.
The learning algorithm asks membership and equivalence queries as in the original framework, but it is also given advice, 
which is used to infer answers to queries when possible and reduce the burden on the teacher. 
We consider advice given via string rewriting systems, which specify equivalence of words w.r.t. the target languages.
The main motivation for the proposed framework is to reduce the number of queries. We show how to adapt Angluin-style learning algorithms to this framework with low overhead. 
Finally, we present empirical evaluation of our approach and observe substantial improvement in query complexity.
\end{abstract}

\section{Introduction}
\label{s:intro}
\Paragraph{Active automata learning}
In the \emph{active automata learning} framework the learning algorithm 
generates a \emph{deterministic finite automaton (DFA)} recognizing an unknown regular language 
$\lang$ (target language) based on the following queries about the language: 
\emph{membership} of a word in $\lang$, and 
\emph{equivalence} of a language of a DFA $\aut$ with $\lang$, where the answer to the latter is either YES, 
or a \emph{counterexample} word distinguishing $\lang$ and $\lang(\aut)$.
This framework has been proposed in the seminal paper~\cite{angluin1987learning} along
 with the \lStar-algorithm implementing learning in polynomial time. 
Since then, active automata learning has been intensively researched 
for both efficiency~\cite{TTTalgorithm,ADTalgorithm,Lsharp} as well as extensions to other automata-based models 
such as Mealy machines~\cite{DBLP:conf/iccad/PenaO98,DBLP:conf/fm/ShahbazG09},
weighted automata~\cite{activeWeightedAutomata} and other models~\cite{DBLP:conf/dlt/DrewesH03,MarusicW15,MoermanS0KS17}.

\Paragraph{Automata learning in applications} 
Active automata learning is a practical method for {synthesis} of 
abstract models from complex systems, called \emph{model learning}.
Model learning has been applied to a wide-rage of systems including Smartcards, ESM Controllers, or 
network protocols such as TCP and SSH~\cite{tcp-learning,ssh-learning,vaandrager2017model}.
In such applications, membership queries are answered automatically by running the existing system. 
However, answering equivalence queries is difficult as equivalence of a program with an automaton is
an undecidable problem and hence it has been only approximated in these applications~\cite{tcp-learning,ssh-learning}.

\Paragraph{Query complexity} 
The number of queries asked by the \lStar-algorithm is polynomial in the size of 
the constructed automaton, but answering even a single query may be expensive.
The main objective of this work is to reduce the number of queries asked by the active learning algorithm 
(called query complexity).
There is a large body of work on improving query complexity of 
active automata learning~\cite{TTTalgorithm,ADTalgorithm,Lsharp,KrugerJR24,DierlFHJST24}.
However, possible gains in this line of work are limited due to minimal amount 
of information necessary to identify one of exponentially
many minimal DFA. Therefore, we consider a different approach, in which 
the learning algorithm is given additional information (called advice) to restrict the search space.

\Paragraph{Learning with advice}
We introduce an extended learning framework, in which the learning algorithm is 
given \emph{advice} about the target language. While the algorithm 
still actively asks membership and equivalence queries, the advice constraints the search space and allows to infer answers to some queries.
Therefore, it combines two approaches to synthesis: deductive synthesis from the specification and
the inductive synthesis based on queries. 
The rationale behind advice is to specify clearly true properties of the model under learning such as certain actions
(e.g. closing a connection) being idempotent.  
In this way, the learning algorithm considers only automata consistent with the advice as candidate automata, 
which reduces query complexity. 
Furthermore, it makes the framework more flexible as some aspects of the constructed automaton can be specified 
with examples (queries) while other with advice.
We consider advice given via \emph{string rewriting systems} and their extension \emph{controlled string rewriting systems}.

\Paragraph{String rewriting systems as advice}
A string rewriting system (SRS) is a set of rewrite rules $l \rightarrow r$, which can be regarded as directed 
identities $l = r$. The transformation of one word into another by replacing an infix $l$ with $r$ iteratively is called \emph{rewriting}.
In our approach, an advice SRS specifies equivalence w.r.t. the language, i.e., 
if $w$ can be rewritten to $v$, then either both belong to the language or none of them belongs to the language. 
Advice is by design not complete; not all equivalent words can be rewritten and their equivalence 
is to be learned with queries. 
The SRS formalism is conceptually different than the automata model, and hence
some properties, which are difficult to express with DFA or examples can be easily expressed with SRS. 
We discuss different types of advice expressed with SRS in Section~\ref{s:power}.

\Paragraph{Overview}
In Section~\ref{s:framework}, we introduce the framework for active automata learning with SRS advice and 
discuss algorithms that infer answers to membership and equivalence queries using SRS advice.
Then, in Section~\ref{s:power}, we discuss properties that can be expressed with advice given via SRS.
We demonstrate viability of the proposed approach with empirical evaluation of our algorithm, where 
we present improvement in query complexity (Section~\ref{s:experiments}).
Finally, we discuss other approaches to define advice in Section~\ref{s:extensions}: controlled string rewriting systems and one-sided advice.

\Paragraph{Related work}
There is a large body of work on optimization of computational and query complexity of 
the \lStar-algorithm~\cite{TTTalgorithm,ADTalgorithm,Lsharp,DBLP:conf/fm/ShahbazG09,KrugerJR24,DierlFHJST24}. 
These results are mostly independent to our work as our algorithm can be adapted to 
any implementation of the \lStar-algorithm. 

Specializing the \lStar-algorithm to improve its performance on particular subclasses of automata is an active research topic.
In the \emph{gray-box learning} framework~\cite{graybox-learning,DBLP:conf/dlt/BerthonBPR21}, 
the algorithm learns the sequential composition of Mealy machines $\mathcal{M}$ and $\mathcal{N}$ knowing 
$\mathcal{M}$ in advance. 
The goal of that work is to learn $\mathcal{M}$ more efficiently than learning the whole sequential composition.
Another example are Mealy machines that are parallel compositions of other Mealy machines~\cite{DBLP:conf/fossacs/LabbafGHM23,DBLP:journals/corr/abs-2405-08647,ProductAutomata}. 
The notion of the parallel composition is the counterpart of the \emph{convolution}, which we consider in this paper. 
We did not compare the results from~\cite{DBLP:conf/fossacs/LabbafGHM23,DBLP:journals/corr/abs-2405-08647}
 with our approach as there are important differences. First, in our framework the decomposition needs to be given in order to define the SRS, while
in~\cite{DBLP:conf/fossacs/LabbafGHM23,DBLP:journals/corr/abs-2405-08647} the decomposition is learned. 
Second, we present a general framework, while the algorithms in~\cite{DBLP:conf/fossacs/LabbafGHM23,DBLP:journals/corr/abs-2405-08647} has been specialized for parallel compositions.

This work is conceptually related to the SyGuS framework~\cite{sygus}, 
which combines inductive synthesis and constraints on the synthesized program.
While inductive synthesis is built on the active automata learning framework, it is not restricted to automata. 
Another difference is the type of constraints; while we consider semantic advice, which refers to the target language, 
constraints in SyGuS are syntactic and given via a context-free grammar. 

There has been work on improving learning by integrating human into a learning loop (\emph{machine coaching}) to
give a piece of advice explaining wrong decisions made by ML models~\cite{DBLP:conf/comma/MarkosTM22}.
While machine coaching advice is interactive, advice in our approach is given before the learning process starts.

\section{Preliminaries}
\label{s:preliminaries}
Given a finite alphabet $\Sigma$ of letters, a \emph{word} $w$ is a finite sequence 
of letters. 
We denote the set of all finite words over $\Sigma$ by $\Sigma^*$.
For a word $w$, we define $w[i]$ as the $i$-th letter of $w$.

\Paragraph{DFA}
A \emph{deterministic finite automaton} (DFA) is a tuple $(\Sigma, Q, q_0, F, \delta)$
consisting of
    the alphabet $\Sigma$, 
    a finite set of states $Q$, 
    the initial state $q_0 \in Q$,  
    a set of final states $F$, and
    a transition function   $\delta \colon Q \times \Sigma \to Q$.
The transition function $\delta$ extends from letters to words in the canonical way;
we define $\widehat{\delta} \colon Q \times \Sigma^* \to Q$ for all $q \in Q$, $a \in \Sigma$ and
$w \in \Sigma^*$ as: $\widehat{\delta}(q,\epsilon) = q$ and 
$\widehat{\delta}(q,aw) = \widehat{\delta}\left(\delta(q,a),w\right)$. 
We abuse the notation by writing $\delta$ instead of $\widehat{\delta}$ to simplify notation.

\Paragraph{Semantics of DFA}
A \emph{run} $\pi$  of a DFA $\aut$ on a word $w$ is a sequence of states $\pi[0] \pi[1] \dots \pi[|w|]$ such that $\pi[0]$ is the initial state 
and for every $0 <i \leq |w|$ we have $\delta(\pi[i-1],w[i]) = \pi[i]$. A DFA has exactly one run on every word.
A run $\pi$ over $w$ is accepting if the last state belongs to the set of accepting states, i.e., $\pi(|w|) \in F$; in such a case we say that $w$ is accepted by $\aut$.
The language \emph{recognized} by a DFA $\aut$, denoted by $\lang(\aut)$, is the set of words accepted by $\aut$.

\Paragraph{Mealy machines and their semantics}
A Mealy machine is an extension of DFA, in which (a)~all transitions are labeled with output letters 
and (b)~all states are accepting.  
Formally, a Mealy machine $\M$ is a tuple $(\Sigma, \Gamma, Q, q_0, \delta, \lambda)$ such that $(\Sigma, Q, q_0, Q, \delta)$ is a DFA
and $\lambda \colon Q \times \Sigma \to \Gamma$ is a labeling function.
The output of $\M$ is a word over $\Gamma$ defined for every $w \in \Sigma^*$ 
as $\lambda(\pi[0],w[1]) \ldots \lambda(\pi[k-1],w[k])$, where $k = |w|$ and
$\pi$ is the run of $\M$ on $w$.
Then, the Mealy machine $\M$ defines a function $f_{\M} \colon  \Sigma^* \to \Gamma^*$ such that for every $w \in \Sigma^*$,
the word $f_{\M}(w) \in \Gamma^*$ is the output of $\M$ on $w$.

\subsection{Active automata learning} 
\label{s:preliminaries-lstar}
We briefly recall active automata learning and the \lStar-algorithm for regular languages~\cite{angluin1987learning}, which is based on the Myhill–Nerode theorem.

\Paragraph{Myhill–Nerode theorem}
For a language $\lang$ over $\Sigma$, we define the right congruence relation of $\lang$, 
denoted by $\rightCongL \subseteq \Sigma^* \times \Sigma^*$, as follows:
for all $u,v \in \Sigma^*$ we have $u \rightCongL v$ if and only if $\forall w. uw \in \lang \Iff vw \in \lang$. 
The Myhill–Nerode theorem states that $\lang$ is regular if and only if $\lang$ has finitely many equivalence classes.
Furthermore, it defines the minimal automaton $\aut_{\lang}$ recognizing $\lang$, i.e., the states of 
$\aut_{\lang}$ correspond to equivalence classes of $\rightCongL$, 
while the transition function and accepting states are uniquely defined.

\Paragraph{Queries} 
To learn the language $\lang$, the learning algorithm asks the following queries to an \emph{oracle for} $\lang$:
\begin{itemize}[noitemsep,topsep=0pt]
\item 
(MQ)~\emph{membership queries}: given $w \in \Sigma^*$, is $w \in \lang$?, and
\item 
(EQ)~\emph{equivalence queries}: given a DFA $\aut$, is $\lang(\aut) = \lang$? 
If not return a counterexample, which is a word from exactly one of the sets $\lang(\aut)$ and $\lang$.
\end{itemize}

\Paragraph{Overview of the algorithm} The \lStar-algorithm iteratively constructs the right congruence relation of 
the target language $\lang$, which defines the minimal DFA recognizing $\lang$.
It maintains two sets of words: the set of \emph{selectors} $S$ containing $\epsilon$, 
which correspond to states of the constructed automaton, and the set of \emph{test words} $C$, 
which defines the following approximation of the right congruence: 
$u \rightCongL^C v$ if and only if $\forall w \in C. uw \in \lang \Iff vw \in \lang$. 
The set $S$ contains exactly one selector of each \emph{reachable} equivalence class of $\rightCongL^C$.
The algorithm starts with $S = C = \set{\epsilon}$ and executes the following loop:
\begin{enumerate}[noitemsep,topsep=2pt]
\item Build a DFA $\aut_{S,C}$ with states $S$ and $\rightCongL^C$ defining the transition relation;
 $\rightCongL^C$ is computed with  membership queries.
\item Perform the equivalence query with $\aut_{S,C}$, and terminate if it is positive, i.e., $\lang = \lang(\aut_{S,C})$.
\item If the equivalence query returns a counterexample, process it to extend $S,C$, and repeat.
\end{enumerate}

The \lStar-algorithm computes the minimal DFA recognizing $\aut_{\lang}$ in polynomial time in $|\aut_{\lang}|$ and the total length of returned counterexamples.

\subsection{Rewriting}
\label{s:rewriting}
We introduce the necessary notions on string rewriting.
For details and examples see~\cite{book1993string}.

\Paragraph{String rewriting} 
A \emph{string rewriting system (SRS)} $\rew$ over an alphabet $\Sigma$ is 
a finite set of pairs of words $(l,r)$ over $\Sigma$.
A pair of words  $(l,r)$  from $\rew$ is called a \emph{string rewrite rule} and denoted with $l \rightarrow r$.
For an SRS $\rew$, we define a \emph{single-step rewrite relation} $\rewritesOneStep_{\rew}$ over words from $\Sigma^*$ as follows:
for all $s,t \in \Sigma^*$, we have $s \rewritesOneStep_{\rew} t$ if and only if there are words $x,y \in \Sigma^*$ and a rewrite rule
$(l,r) \in \rew$ such that $s = x l y$ and $t = x r y$. 
The (string) rewriting relation $\rewrites_{\rew}$ is the transitive and reflexive closure of $\rewritesOneStep_{\rew}$.
We will omit the $\rew$ subscript if the SRS is clear from the context.

\Paragraph{Normal forms}  
A word $t$ is in a \emph{normal form} if there is no $t'$ such that $t \rewritesOneStep_{\rew} t'$. 
If for a word $s$ there is a unique $t$ in a normal form such that $s \rewrites_{\rew} t$, we say
that $t$ is the normal form of $s$ (w.r.t. $\rew$) and denote it by $\normalFormSRS{\rew}{s}$.
We skip $\rew$ and simply write $\normalForm{s}$ if the SRS $\rew$ is clear from the context.

\Paragraph{Computing normal forms}  
A (finite) SRS $\rew$ is
\emph{terminating} if every sequence of words $s_0, s_1, \ldots$ 
    such that $s_i \rewritesOneStep s_{i+1}$ is finite, 
\emph{confluent} if for all words $s,s_1, s_2$, if $s \rewrites s_1$ and $s \rewrites s_2$, then there is $t$ such that
    $s_1 \rewrites t$ and $s_2 \rewrites t$, and
\emph{convergent} if it is terminating and confluent.
In a convergent SRS, every word $s$ has the (unique) normal form, 
which can be computed by applying reductions in arbitrary way as long as the word can be reduced. 
Termination guarantees that this process terminates and confluence entails the uniqueness of the result.
Therefore, for a convergent SRS $\rew$, one can effectively compute $\normalFormSRS{\rew}{s}$, and hence 
the following problem is decidable:
given words $s,t$, is there a word $u$ such that $s \rewrites u$ and $t \rewrites u$.
It suffices to compute normal forms $\normalForm{s},\normalForm{t}$ and check whether $\normalForm{s} = \normalForm{t}$.
While the above problem is decidable for convergent SRS, the property of being convergent is undecidable over SRS. 
More precisely, both termination and confluency are in general undecidable, 
but there is a large body of work on these properties~\cite{book1993string}.

\section{Learning automata with advice}
\label{s:framework}
\newcommand{\oracle}{\mathcal{O}}
\newcommand{\rewReset}{\rew_{\textsc{syn}}}
\newcommand{\rewIdm}{\rew_{\textsc{idm}}}

We present the framework for active learning automata with advice given via a string rewriting system. 
While we present the framework and the algorithms for DFA, we conclude this section 
with the discussion on extensions to other automata models.

\subsection{Learning framework}
The advice given to the learning algorithm specifies which words are equivalent w.r.t. the target language, 
i.e., the SRS rewrites words from the target language to other words in the language and, 
similarly, words not in the language to words not belonging to the language. 
The advice is intended to specify ground truth about the target language, while it is 
not expected to be complete; it does not need to relate all equivalent words.
However, wrong advice can prevent the algorithm from learning the target language.

We discuss a relaxed variant of advice (one-sided advice), which is related to the implication 
rather than equivalence, in Section~\ref{s:one-sided}.

\begin{definition}
\label{def:consistent}
An SRS $\rew$ is \emph{consistent} with a (regular) language $\lang$ if and only if
    for all words $s,t$,  if $s \rewrites_{\rew} t$ , then $s \in \lang \Iff t \in \lang$.  
\end{definition}

Consider an SRS $\rew$ consistent with $\lang$. 
Observe that due to transitivity and symmetry of equivalence, if words $s,t$ have the same normal form,
then either both belong to $\lang$ or none of them belongs to $\lang$. 
Furthermore, if $s \rewrites_{\rew} t$, then for every word $u$, we have 
$su \rewrites_{\rew} tu$. It follows that $s$ and $\normalForm{s}$ belong to the same equivalence class of the right 
congruence relation for $\lang$.

\begin{example}
\label{ex:reset-word}
Consider $\Sigma = \set{a,b}$, a word $w \in \Sigma^*$ and $\rewReset[w] = \{ aw \rightarrow w, bw \rightarrow w \}$.
Suppose that $\rewReset[w]$ is consistent with a regular language $\lang$. 
Then, for any word with the infix $w$, which can be written as $x w y$, it belongs to $\lang$ if and only
if $w y$ belongs to $\lang$, i.e., any prefix before $w$ does not influence whether 
the whole word belongs to $\lang$. 
We can show that for the minimal DFA $\aut$ for $\lang$, the word $w$ is a \emph{synchronizing word}, i.e., 
there is a state $s$ of $\aut$ such that for all states $q$ of $\aut$ we have
$\delta_{\aut}(q, w) = s$, all states lead to the same state $s$ upon reading $w$.
\end{example}

The active learning problem with SRS advice is to construct a minimal DFA recognizing 
a regular language $\lang$ having access to an oracle for $\lang$ and an SRS consistent with $\lang$.

\begin{definition}
The active automata learning with advice problem is as follows:
\begin{itemize}
    \item \textbf{Input}: an oracle $\oracle$ answering membership and equivalence queries about the target 
    regular language $\lang$, and an \emph{advice} SRS $\rew$ consistent with $\lang$.
    \item \textbf{Output}: a minimal-size DFA recognizing $\lang$.
\end{itemize}
\end{definition}

To reduce the number of queries, our algorithm, which contains  the \lStar-algorithm as a blackbox,
intercepts all queries asked by the \lStar-algorithm, and attempts to infer the answer using the advice SRS.
Only if inference fails, the query is forwarded to the oracle (Figure~\ref{fig:architecure}).
The advantage of this architecture is that it works with any implementation of the \lStar-algorithm 
(e.g. the observation table based, the TTT-algorithm~\cite{TTTalgorithm} or the L$^\#$-algorithm~\cite{Lsharp}).
We discuss how to process equivalence and membership queries separately. 

\subsection{Improving query complexity for DFA}
\label{s:query-improvment}

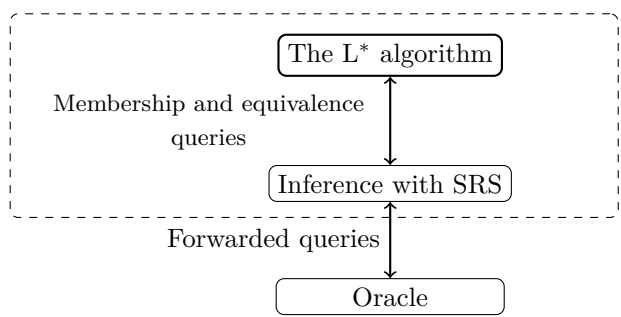
\begin{figure}[!ht]
\centering
\begin{tikzpicture}
\node [rectangle, rounded corners = 3.0,draw,thick ] (Lstar) at (0,0) {The L$^*$ algorithm};
\node [rectangle, rounded corners = 3.0,draw] (Inference) at (0,-1.7) {Inference with SRS};
\node [rectangle, rounded corners = 3.0,draw,dashed, minimum width=8cm, minimum height=2.7cm] (Algorithm) at (-1,-0.8) {};
\node [rectangle, rounded corners = 3.0,draw,minimum width =3cm] (Oracle) at (0,-3.2) {Oracle};

\draw[<->, thick] (Lstar) to node[left] {\begin{tabular}{c} Membership and equivalence \\ queries \end{tabular}}  (Inference);
\draw[<->, thick] (Inference)to node[left] {Forwarded queries}  (Oracle);
\end{tikzpicture}
\caption{An overview of the learning algorithm with advice}
\label{fig:architecure}
\end{figure}

To efficiently infer answers to membership queries, we require the SRS to be convergent, 
as inferring answers from advice involves computing normal forms.
To infer answers to equivalence queries, the SRS does not have to be convergent, but it has to be 
over the same alphabet as the target language (see Remark~\ref{rem:external-symbols}). 
Advice can be used for each type of query independently; e.g. having an non-convergent SRS, we can still
reduce the number of equivalence queries alone.

\subsubsection{Equivalence queries} 
The answer to the equivalence query can be positive only if a given candidate DFA $\aut$ is consistent with $\rew$.
If there are words $x,y$ witnessing inconsistency, i.e.,  $x \rewrites y$, but
$\neg(x \in \lang(\aut)  \Iff y \in \lang(\aut))$, then one of $x, y$ is a counterexample.
The algorithm can learn which of them is a counterexample with a single membership query.
Otherwise, if $\rew$ is consistent with the language of $\aut$, the algorithm forwards 
the equivalence query to the oracle.

The algorithm deciding consistency of $\rew$ with $\lang(\aut)$ 
is based on the following observation: 
\begin{lemma}
\label{l:compute-inconsistency-witnesses}
Let $\rew$ be an SRS over $\Sigma$ and $\aut$ be a minimal DFA over $\Sigma$ as well. 
The SRS $\rew$ is \emph{consistent} with $\lang(\aut)$ if and only if 
for every state $q$ of $\aut$ and every rule $l \rightarrow r \in \rew$ we have 
$\delta_{\aut}(q,l) = \delta_{\aut}(q,r)$.
\end{lemma}
\begin{proof}
For the implication from left to right, consider $x,y$ violating consistency, i.e., 
$x \rewrites y$, but $x \in \lang(\aut) \Iff y \in \lang(\aut)$ does not hold.
Consider a rewriting sequence $x = x_0 \rewritesOneStep x_1 \rewritesOneStep \ldots \rewritesOneStep x_k = y$. 
There is a pair $x_i, x_{i+1}$ such that $x_i \in \lang(\aut) \Iff x_{i+1} \in \lang(\aut)$ does not hold.
Thus, then there are $x',y'$ such that $x' \rewritesOneStep y'$ and $x' \in \lang(\aut) \Iff y' \in \lang(\aut)$ does not hold. 
Since $x'$ rewrites to $y'$ in one step, there is a rewrite rule $l \rightarrow r$ in $\rew$ and we have
$x' = u l v$, and $y' = u r v$. Let $q$ be a state of $\aut$ reached from the initial state upon $u$, and let 
$s_1, s_2$ be states reached from $q$ upon $l$ and $r$ respectively. 
Now, one of the states $\delta_{\aut}(s_1,v)$ and $\delta_{\aut}(s_2,v)$ is accepting and one is rejecting, 
and hence $s_1 \neq s_2$.

Conversely, assume that there is a rewrite rule $l \rightarrow r \in \rew$ and a state $q$ 
such that $s_1 = \delta(q,l) \neq \delta(q,r) = s_2$.
Let $u$ be a word upon which $\aut$ reaches $q$.
Since $\aut$ is minimal, Myhill-Nerode theorem implies there is a word $v$ distinguishing states $s_1$ and $s_2$, 
i.e., exactly one of the states: $\delta_{\aut}(s_1,v)$ and $\delta_{\aut}(s_2,v)$ is accepting. 
It follows that exactly one of the states: $\delta_{\aut}(q_0,ulv)$ and $\delta_{\aut}(q_0,urv)$ is accepting. 
Therefore, words $ulv$ and $urv$ satisfy $ulv \rewritesOneStep urv$ and $ulv \in \lang \Iff urv \in \lang$ does not hold.
\end{proof}

\Paragraph{Checking consistency}  
The algorithm verifying consistency is presented as Algorithm~\ref{a:consistency}.
If it finds $l \rightarrow r$ in $\rew$ and a state $q$ such that 
$\delta_{\aut}(q,l) \neq \delta_{\aut}(q,r)$, then it computes shortest words $u,v$ such that 
$\delta(q_0,u) = q$ and $v$ distinguishes $\delta(q,l)$ and $\delta(q,r)$.
Then, exactly one of the words $ulv$ and $urv$ is a counterexample, 
because  $ulv \in \lang \Iff urv \in \lang$ holds, while
$ulv \in \lang(\aut) \Iff urv \in \lang(\aut)$ does not hold. 

\Paragraph{Complexity} 
The algorithm checking consistency iterates over all rules from $\rew$ and all states in $\aut$.
In each iteration it has to compute $\delta(q,l)$ and $\delta(q,r)$ which amounts to 
$\bigO(\textrm{size}(\rew)\cdot|\aut|)$.
In the finial iteration it has to compute words $u, v$. 
The word $u$ can be found with a simple reachability check in $\bigO(|\aut|)$, and
finding $v$ requires dynamic programming to run in $\bigO(|\aut|)$ using the following lemma.

\begin{lemma}
Given a minimal DFA $\aut$ and its states $q, q_1, q_2$ such that $q_1 \neq q_2$,
\begin{itemize}
\item a word $u$ such that $\delta_{\aut}(q_0, u) = q$ can be computed in $\bigO(|\aut|)$,
\item a word $v$ distinguishing $q_1$ and $q_2$, i.e., exactly of state of $\delta_{\aut}(q_1, v), \delta_{\aut}(q_2, v)$ is accepting, can be computed in $\bigO(|\aut|)$.
\end{itemize}
\end{lemma}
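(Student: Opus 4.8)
The plan is to prove both claims by standard graph-search arguments on the DFA, treating $\aut$ as a directed graph whose vertices are the states $Q$ and whose edges are the transitions labeled by letters of $\Sigma$. Throughout I would exploit that $\aut$ is minimal, so every state is reachable from $q_0$ and every pair of distinct states is distinguished by some word; both facts are exactly what make the two searches below succeed.

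For the first item, finding a word $u$ with $\delta_{\aut}(q_0,u)=q$, I would run a breadth-first search from $q_0$ over the transition graph, recording for each discovered state a parent pointer together with the letter on the edge used to reach it. Since $\aut$ is minimal, $q$ is reachable, so BFS discovers $q$; tracing the parent pointers back from $q$ to $q_0$ and reading off the recorded letters yields $u$. BFS visits each state and each outgoing transition at most once, giving $\bigO(|\aut|)$ time (counting both states and transitions in the size of $\aut$), and the recovered path has length at most $|Q|-1$.

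For the second item, finding a word $v$ distinguishing $q_1$ and $q_2$, the idea is to search in the \emph{product} perspective but without materializing the full quadratic product. The key observation is that a word $v$ distinguishes $q_1$ and $q_2$ exactly when $\delta_{\aut}(q_1,v)$ and $\delta_{\aut}(q_2,v)$ differ in acceptance. I would compute, by a backward (reverse-edge) reachability / dynamic-programming pass, for each state $p$ the shortest word $v_p$ such that $\delta_{\aut}(p,v_p)$ is accepting differs from the status one would get from a partner state; concretely, one initializes the search from the accepting–rejecting boundary and propagates distinguishing witnesses backward along reversed transitions. Because $\aut$ is minimal, $q_1$ and $q_2$ are genuinely distinguishable, so the backward search reaches the pair $(q_1,q_2)$ and the accumulated letters spell out $v$.

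The main obstacle is the second item: a naive distinguishing-word computation works on the product automaton and costs $\bigO(|\aut|^2)$, whereas the lemma demands $\bigO(|\aut|)$. The delicate point is therefore to avoid building the product and still guarantee a genuinely distinguishing word, exploiting that in a \emph{minimal} DFA the Hopcroft-style partition into states already certifies distinguishability, so the distinguishing witness for any pair can be read off a single global backward pass (for instance, precomputing for each state the shortest word to a fixed acceptance-type boundary) rather than a per-pair product search. I would need to argue carefully that combining the per-state witnesses for $q_1$ and $q_2$ yields a word on which their acceptance statuses actually differ, and that the whole precomputation — shared across all pairs and hence amortized — runs in linear time in the size of $\aut$.
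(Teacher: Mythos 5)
Your first item coincides with the paper's proof: BFS from $q_0$ over the transition graph with parent/letter pointers, in $\bigO(|\aut|)$ time; nothing to object to there. The genuine gap is in the second item, and you have in fact flagged it yourself: your proposal never actually constructs a distinguishing word. What you describe is first a backward search that is implicitly a search over \emph{pairs} of states (the product view, which you concede is quadratic), and then a per-state precomputation (``shortest word to a fixed acceptance-type boundary for each state'') whose correctness you explicitly leave open. That open step is the whole content of the claim, and as stated the per-state idea does not work: distinguishability is a property of the pair, not of each state separately. The shortest word taking $q_1$ to an accepting state may perfectly well take $q_2$ to an accepting state too, and there is no generic way to ``combine'' two per-state witnesses into a word on which the acceptance statuses of $q_1$ and $q_2$ differ.

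The paper's proof supplies exactly the missing mechanism, via Moore-style partition refinement with a per-pair trace-back. Define relations $\sim_i$ on the states: $s_1 \sim_0 s_2$ iff $s_1,s_2$ agree on membership in $F$, and $s_1 \sim_{i+1} s_2$ iff $\delta_{\aut}(s_1,a) \sim_i \delta_{\aut}(s_2,a)$ for all $a \in \Sigma \cup \set{\epsilon}$. The sequence stabilizes by $i = |Q|-1$, and minimality of $\aut$ forces every class of $\sim_{|Q|-1}$ to be a singleton, so there is a least $i$ with $q_1 \not\sim_i q_2$. The word $v$ is then built greedily, one letter per level: if $i=0$ then $v=\epsilon$; otherwise there is a letter $a$ with $\delta_{\aut}(q_1,a) \not\sim_{i-1} \delta_{\aut}(q_2,a)$, and one recurses on that pair at level $i-1$, yielding a distinguishing word of length at most $|Q|-1$. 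This level-by-level argument is what guarantees the witness genuinely separates the given pair, which is precisely what your sketch cannot guarantee. (To be fair, your complexity worry is not baseless: the paper itself is terse about why computing the refinement sequence fits in $\bigO(|\aut|)$, and it also notes that in the intended use the words $u,v$ are simply available from the \lStar-algorithm's own data. But the paper does give a complete, correct construction of $v$, whereas your proposal leaves its central step unproven.)
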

\begin{proof}
A word $u$ can be found by BFS over $\aut$ considered as a labeled graph. The complexity of BFS is $\bigO(|V|+|E|)$, but it our case $|V|, |E|$ are bounded by $|\aut|$.

We define a sequence of relation $\sim_i$ on $Q_{\aut}$ as follows:
for $i =0$, we define $s_1 \sim_0 s_2$ if and only if $s_1, s_2 \in F$ or $s_1, s_2 \notin F$.
For $i \geq 0$, we define $s_1 \sim_{i+1} s_2$ if and only if 
for all $a \in \Sigma \cup \set{\epsilon}$ we have 
$\delta_{\aut}(s_1,a) \sim_{i} \delta_{\aut}(s_2,a)$.

All relations $\sim_i$ are equivalence relations on $Q$.
Observe that if for some $i$ we have $\sim_{i} = \sim_{i+1}$, then for all $k \geq i$ we have $\sim_{i} = \sim_{i+1}$.
In follows that for $n = |Q|-1$ we have $\sim_{n} = \sim_{n+1}$.

Observe that since $\aut$ is minimal, each equivalence class of $\sim_{n}$ is a singleton consisting of a single state. 
Finally, if $q_1$ and $q_2$ are not in the relation $\sim_{i}$, then there is a word of length at most $i$ distinguishing them.
This word can be computed as follows: if $q_1 \sim_{i} q_2$ does not hold, then either $i = 0$ and exactly on of the states is accepting, and hence $v = \epsilon$, or
$i >0$ and there is $a \in \Sigma \cup \set{\epsilon}$ such that $q_1 = \delta_{\aut}(s_1,a)$, $q_2 = \delta_{\aut}(s_2,a)$ and 
$s_1 \sim_{i-1} s_2$ does not hold.
\end{proof}

However, since 
$\aut$ has been produced by the \lStar-algorithm, 
words $u,v$ are in the internal data of the \lStar-algorithm; $u$ is the selector for $q$ and $v$ is among 
test words.

\Paragraph{Answering equivalence queries}
 In summary, before asking an equivalence query, the algorithm checks consistency of the SRS with
the candidate automaton $\aut$ (Algorithm~\ref{a:consistency}).
If it gets a pair of words $x,y$, it asks a membership query for $x$ to see whether $x$ distinguishes $\lang$ and $\lang(\aut)$ and hence it is a counterexample.
If not, then $y$ has to be a counterexample. 
Finally, if the SRS is consistent with $\aut$, the algorithm 
asks an actual equivalence query.

\begin{algorithm}
\caption{Check Consistency}{}
\label{a:consistency}
\begin{algorithmic}[1]
\renewcommand{\algorithmicrequire}{\textbf{Input:}}
\renewcommand{\algorithmicensure}{\textbf{Output:}}
\State{\textbf{input:} SRS $\rew$ and a minimal DFA $\aut$}
\For{$(l,r) \in \rew$}
   \For{$q \in Q_{\aut}$}
       \If{$\delta_{\aut}(q,l) \neq \delta_{\aut}(q,r)$}
           \State{Find a word $u$ such that $\delta_{\aut}(q_0,u) = q$}
           \State{Let $q_1 = \delta_{\aut}(q,l)$ and $q_2 = \delta_{\aut}(q,r)$}
           \State{Find a word $v$ distinguishing states $q_1$ and $q_2$, i.e., 
           exactly of one $\delta_{\aut}(q_1,v)$, $\delta_{\aut}(q_2,v)$ is accepting.}
            \State{\textbf{return} $(ulv, urv)$}
         
       \EndIf 
   \EndFor
\EndFor
\State{\textbf{return} Consistent}
\end{algorithmic}
\end{algorithm}

\begin{remark}[Additional symbols in the rewriting system]
\label{rem:external-symbols}
\newcommand{\rewCom}{\rew_{\textrm{com}}}
We require that the alphabets of the automaton and the SRS coincide, i.e., the rewriting system has no additional 
letters. It enables reduction of checking consistency to reasoning about single-step rewriting. 
Having additional letters in the SRS would extend the expressive power of SRS, but it would make
checking consistency harder. 
Suppose that an SRS is over $\Gamma \cup \Sigma$ and a DFA is over $\Sigma$ only and 
consider $x,y \in \Sigma^*$ such that $x \rewrites y$. 
It is possible that in a long chain of single step rewrites steps between 
$x$ and $y$ all intermediate words contain letters from $\Gamma$ and hence cannot be evaluated by the DFA.
Therefore, consistency of the single-step rewriting relation restricted to $\Sigma^*$ does not imply
consistency of the rewriting relation restricted to $\Sigma^*$. 
Furthermore, we cannot directly adapt our method based on computation of normal forms as the 
normal form of a word from $\Sigma^*$ need not belong to $\Sigma^*$.

To use normal forms, we can adapt the definition of consistency as follows:
$\rew$ is \emph{consistent} with $\lang$ if for all words $x,y \in \Sigma^*$, if 
$\normalFormSRS{\rew}{x} = \normalFormSRS{\rew}{y} \in (\Sigma \cup \Gamma)^*$,
 then $x \in \lang \Iff y \in \lang$.
Still, this does not solve the problem as checking consistency in this variant is undecidable.
Consider $\Gamma$ to be a primed copy of $\Sigma$ and suppose that some letters in $\Gamma$ only are commutative, 
i.e., an SRS $\rewCom$ consists of rules $ab \rightarrow ba$ for some pairs of letters in $\Gamma$ as well as
 rewrite rules allowing for rewriting each letter from $\Sigma$ to the corresponding copy in $\Gamma$. 
Then, one can easily reduce the intersection problem for regular trace languages, 
which is undecidable~\cite{bertoni1982equivalence}, to checking consistency of $\rewCom$ with a given language. 
\end{remark}

\subsubsection{Membership queries} 
To infer answers to membership queries, the SRS $\rew$ is required to be convergent.  
Our algorithm runs the \lStar-algorithm and keeps \emph{cache} of all membership queries in a dictionary, 
which stores normal forms of queried words along with the answers.
Then, when the \lStar-algorithm asks a membership query for $w$, the algorithm computes 
the normal form $\normalForm{w}$ and returns the answer from the cache if it
is present. 
Otherwise, the algorithm forwards the membership query to the oracle and upon receiving the answer, it saves
$\normalForm{w}$ with the answer in the cache and forwards the answer back to the \lStar-algorithm.

Consistency of $\rew$ with the target language $\lang$ implies that 
if  $\normalFormSRS{\rew}{s} = \normalFormSRS{\rew}{t}$, then $s \in \lang \Iff t \in \lang$.

The proposed approach has low overhead as it involves computing the normal form of a single word and a single
dictionary look-up.

\subsection{Advice for Mealy machines and other automata}

We discuss how to adapt the active learning with advice framework to other automata using 
Mealy machines as a leading example to motivate our approach (see~\cite{DBLP:conf/fm/ShahbazG09} for
details regarding active learning for Mealy machines.)

\Paragraph{Failed attempt} For DFA, consistency states that rewriting relation preserves the output of the automaton (acceptance/rejection).
For Mealy machines the output is a word
and hence the first idea would be to require that if $x \rewrites y$, then output words on $x$ and $y$ are equal.
That approach would only admit advice SRSs that do not change the length of word as the length of the output word 
is the same as the length of input.
In particular, Example~\ref{ex:reset-word} would not extend to Mealy machines.  
Instead, to adapt the consistency we employ the fundamental Myhill-Nerode theorem.

\Paragraph{Myhill-Nerode theorem for Mealy machines}
Let $f$ be a function from $\Sigma^*$ to $\Gamma^*$. 
We define $P^f \colon \Sigma^* \to \Gamma$ such that $P^f(w)$ is the last letter of $f(w)$. 
If $f$ is computed by a Mealy machine $\M$, then 
$P^f(w)$ is the last output letter returned by $\M$ processing $w$. 
The function $P^f$ is a crux for a counterpart of Myhill-Nerode theorem for Mealy machines. 
Let $\rightCongF$ be an equivalence relation on $\Sigma^*$ such that 
$u \rightCongF v$ if and only if for all words $w$ we have 
$P^f(uw) = P^f(vw)$.
The function $f$ is computed by a Mealy machine if and only if $\rightCongF$ has finitely
many equivalence classes~\cite{DBLP:phd/de/Niese2003}. Moreover, the states of the minimal Mealy machine for $f$ correspond to 
equivalence classes of $\rightCongF$.

\Paragraph{Advice for Mealy machines}
Based on the function $P^f$ we adapt the consistency to Mealy machines. 
The SRS $\rew$ is \emph{consistent} with $f \colon \Sigma^* \to \Gamma^*$ if for all words $s,t$, if
$s \rewrites t$, then $P^f(s) =  P^f(t)$.  
Then, we define the learning framework with SRS advice in the same way as for DFA.
Observe that inference of the answers to queries presented in Section~\ref{s:query-improvment} can
be straightforwardly adapted to Mealy machines.

\begin{example}
Recall the SRS $\rew_w$ from Example~\ref{ex:reset-word}.
Consider a function $f \colon \Sigma^* \to \Gamma^*$  computed by a Mealy machine such that $\rew_w$ is consistent with $f$.
Then, consistency implies that for any input words of the form $x w y$ and $x' w y$, 
the last $|y|$ in the output words $f(x w y)$ and $f(x' w y)$ are the same.
Furthermore, as in the case of DFA, we can show that for the minimal Mealy machine computing $f$ the word $w$ is a reset word. 
\end{example}

\Paragraph{Advice for other automata models}
We can employ these ideas to adapt advice to other automata models, which admit counterparts of Myhill-Nerode theorem
such as weighted automata~\cite{DBLP:books/ems/21/DrosteK21}, 
quantitative automata~\cite{DBLP:journals/tocl/ChatterjeeDH10} or tree automata~\cite{DBLP:books/ems/21/LodingT21}.

\section{Examples of advice}
\label{s:power}

\newcommand{\langAdd}{\lang_{\textrm{Add}}}
\newcommand{\autExt}{\widehat{\aut}}
In this section, we discuss the properties modeled by string rewriting systems.
The presented properties are ubiquitous among systems and easy to discover.
Therefore, SRS advice can often be effortlessly produced. 

\subsection{Idempotent actions}
Reset words presented in Example~\ref{ex:reset-word} are an example of an \emph{idempotent} action, i.e., 
an action which repeated multiple times has the same effect as a single action. 
Such actions are considered to be safe to repeat in case of 
(possible) failure. For this reason idempotent actions frequently appear in design of 
hardware~\cite{DBLP:conf/micro/KruijfS11} and software~\cite{DBLP:conf/popl/RamalingamV13}.
Examples of idempotent actions are writing to a register, closing a TCP connection~\cite{tcp-learning}, 
PUT, DELETE, request methods in HTTP (RFC 5741), and others. 
Observe that in these examples idempotence is either clear from the description or it is stated in the specification.
Therefore, idempotent actions can be easily identified.

We can model that a word $u$,  corresponding to a sequence of actions, 
is idempotent with the rewrite rule $ u u \rightarrow u$. 
In particular, $u$ can be a single letter.

\subsection{Concurrent and distributed systems}
Concurrent and distributed systems consist of several components running independently. 
Each of the components is modeled by an automaton (more precisely a Mealy machine) with two types of input letters:
\emph{input actions}, which correspond to communication coming to the component from the outside world, and
\emph{internal actions}, which correspond to internal state change of the component and are not visible
to the outside world~\cite{DBLP:conf/podc/LynchT87,DBLP:journals/corr/abs-2001-04235}.
Then, all automata corresponding to particular components are combined with \emph{parallel composition} into a single automaton,  
the \emph{system automaton}. The parallel composition is essentially the product of all the components.

Both input and internal actions are modeled with input letters in each component automaton, while
the input alphabet of the system automaton is the union of all input and internal actions of all components.
Thus, an input word corresponds to some interleaving of actions of different components.
As internal actions are not visible by other components, they are commutative, i.e., executions with swapped 
subsequent internal actions are semantically equivalent.

We can model insensitivity to the order of internal actions with an SRS.
Let $\Sigma$ be an alphabet of all actions and assume some linear order $\leq$ on it.
Consider a binary \emph{independence relation} $\Ind$ on $\Sigma$ defined as
$(a,b) \in \Ind$ if $a,b$ are internal actions from different components.
We define $\rew_{\Ind} = \set{ ba \rightarrow ab \mid (a,b) \in \Ind \land a < b}$.
The SRS $\rew_{\Ind}$ expresses that internal actions from different components can be swapped freely.
The linear order is to avoid non-terminating rewriting $ab \rightarrow ba \rightarrow ab \ldots$.

\subsection{Bit-wise addition and other invariants}
\label{s:bit-wise-addition}
\newcommand{\rewBin}{\rew_{\textrm{bin}}}
Consider the language encoding the relation of bit-wise addition.
Let $\Sigma = \set{0,1}^3 $.
Then, every word $w = (a_1, b_1, c_1) \ldots (a_k, b_k, c_k)$ over $\Sigma$ corresponds to three bit vectors $a_1 \ldots a_k$,
$b_1 \ldots b_k$, and $c_1 \ldots c_k$, which in turn correspond to three numbers $x_w, y_w, z_w$ represented 
in the least-significant-bit-first manner.
Consider the language $\langAdd = \set{w \mid x_w + y_w = z_w}$. 
The language $\langAdd$ is invariant over swapping $a_i$ with $b_i$, i.e., 
letters $(1,0,0), (1,0,1)$ can be rewritten respectively to $(0,1,0)$ and $(0,1,1)$, i.e., 
we consider an SRS $\rewBin = \{ (1,0,0) \rightarrow (0,1,0), (1,0,1) \rightarrow (0,1,1)\}$. 
Applying advice from $\rewBin$ effectively reduces the alphabet from $8$ to $6$ letters, and hence
it reduces the number of membership queries.

This is an example of a general scenario, where the specified function 
is invariant over some operations that can be expressed via word rewriting.

\section{Experiments}
\label{s:experiments}
We present the empirical evaluation of the active learning algorithm with advice.

\Paragraph{Implementations of the \lStar-algorithm} 
We have a custom implementation of the original \lStar-algorithm~\cite{angluin1987learning} as well
as an implementation on top of a state-of-the-art automata learning library \learnlib~\cite{DBLP:conf/cav/IsbernerHS15}, 
which is open source. 
In the latter case, we have only implemented inferring answers to equivalence queries.

\Paragraph{Experimental data}
We consider two sources of regular languages for experiments. 
First, we consider languages recognized by \emph{random DFA}. 
The number of states of a random DFA is given as a parameter. 
Next, for all $q \in Q$ and $a \in \Sigma$, the target state $\delta(q,a)$ is picked at random with the uniform distribution over $Q$.
The set of accepting states is picked at random as well; each state is accepting with probability $1/10$.
The random seeds have been fixed for the sake of reproducibility of the results.
As a second source of data, we consider \emph{pattern languages}, which for a word $u \in \Sigma^*$, are defined as 
$\lang_u = \set{ w \mid w \text{ contains } u}$.
This is an important class as micro-controllers essentially implement multipattern matching 
(if some event happens, execute an appropriate interrupt).
Furthermore, random DFA have certain properties with high probability, and hence
one can argue that interesting DFA are rare (not random).

\Paragraph{Idempotent actions}
In experiments, we consider DFA, in which a designated letter $a$ is idempotent, i.e., 
 for every state $q$ we have $\delta(q,a) = \delta(q,aa)$.
In our own implementation of the \lStar-algorithm, we generated 100 random DFA over a $4$ letter alphabet 
with the random size between 500 and 1000.
Then, for each random DFA, we modify it randomly to ensure idempotency of $a$.
The advice SRS is defined as $\rewIdm[a] = \set{ aa \rightarrow a}$.
For 16 (out of 100) cases, we have observed increase (1\% to 9\% and 4.5\% on average) 
in the number of membership queries.
However, in all but one case, there has been 15\% on average decrease in the number of equivalence queries.
The reason for the increase is that some counterexamples are more informative than others.
In the implementation based on \learnlib{}, we have considered 15 random DFA over a $4$ letter alphabet
with the random size between 500 and 1000. The reduction of equivalence queries ranges from 2\% to 55\% with
the mean 22\%.
The summary statistics are in Table~\ref{tb:three-experiments}, while
full source data are in~\cite{experimentsRepo}.

\Paragraph{Concurrent and distributed systems}
We have conducted experiments with two components. For simplicity, we work with DFA, for which \emph{parallel composition}
corresponds to the following notion of the convolution of languages.
Consider an alphabet $\Sigma = \Sigma_1 \cup \Sigma_2$, where $\Sigma_1$ and $\Sigma_2$ need not be disjoint.
A word $w$ over $\Sigma$ is a \emph{convolution} of words $w_1 \in \Sigma_1^*$ and $w_2 \in \Sigma_2^*$ if and only if
for $i=1,2$, the projection of $w$ on $\Sigma_i$ yields $w_i$. 
The language $\lang$ is the \emph{convolution} of languages $\lang_1$ and $\lang_2$ if it consists of all convolutions 
of all pairs of words $w_1,w_2$ with $w_1 \in \lang_1$ and $w_2 \in \lang_2$.
Observe that if languages  $\lang_1$ and $\lang_2$ are regular, then their convolution is regular as well. 

If the target language is the convolution of some regular languages over $\Sigma_1$ and $\Sigma_2$, then
the \emph{private} letters from $\Sigma_1 \setminus \Sigma_2$ and $\Sigma_2 \setminus \Sigma_1$ are commutative, 
which can be expressed with 
the SRS $\rewConv$ consisting of rules $b a \rightarrow ab$ for all $a \in \Sigma_1 \setminus \Sigma_2$ 
and $b \in \Sigma_2 \setminus \Sigma_1$.
Note that the order of letters in $\rewConv$ is fixed, and hence it is terminating and confluent.
If $\Sigma_1$ and $\Sigma_2$ are disjoint, the normal form of a word $w \in \Sigma^*$ is a word of the form
$w_1 w_2$, where $w_1 \in \Sigma_1^*$ and $w_2 \in \Sigma_2^*$, which is obtained by simple sorting of letters. 

The first example group of concurrent systems consist of convolutions of two pattern DFA
with disjoint alphabets with randomly generated patterns. 
We considered two types of pattern DFA: \texttt{OR} type pattern DFA (word $w$ is accepted if any of the patterns occur in a $w$) 
as well as \texttt{AND} type pattern DFA (word $w$ is accepted only if all patterns occur in $w$).
Each component of the convolution is constructed based on two randomly chosen patterns of length 10 over 4 letter alphabet. 
We observe significant reduction in the number of equivalence queries as
the number of equivalence queries decreases from numbers ranging from 60 to 300 
to 1 to 6, which is a 91\%--98\% reduction. 
This number of equivalence queries is proportional to the maximum of the number of equivalence queries
needed to learn each component DFA independently. This is possible, because counterexamples for the convolution of DFA
often contain counterexamples for both component DFA. 
\begin{table}
    \centering
    \begin{tabular}{|c||c|c||c||c|c|c|}
        \hline
        \multicolumn{3}{|c|}{\textbf{Target language}} & \multicolumn{2}{|c|}{no advice} 
        & \multicolumn{2}{|c|}{ with advice} \\
        \hline
        Conv & $\aut_1$ & $\aut_2$ &  MQ & EQ & MQ & EQ \\
        \hline
            208 & 16 & 13 & 226043  & 84  & 106100  & 4  \\\hline
            494 & 26 & 19 & 1013981 & 124 & 380094  & 4  \\\hline
            720 & 30 & 24 & 2274211 & 199 & 830729  & 4  \\\hline
            924 & 28 & 33 & 3668847 & 268 & 1306835 & 4 \\\hline
            952 & 28 & 34 & 4373192 & 220 & 1162988 & 4  \\\hline
            \hline
            255 & 15 & 17 & 400319 & 153  & 177084 & 6  \\\hline
            352 & 22 & 16 & 801477 & 165 & 258904 & 3  \\\hline
            675 & 27 & 25 & 988409 & 154 & 391815 & 1  \\\hline
            676 & 26 & 26 & 1543634 & 168 & 591684 & 2 \\\hline
            783 & 27 & 29 & 2451237 & 240 & 868529 & 2  \\\hline
        \end{tabular}
    \caption{Query complexity of learning convolutions of (a)~two random \textbf{pattern} DFA (first 5 rows), 
    and (b)~two random DFA (remaining 5 rows)}
    \label{tbl:conv1}
    \end{table}

The second example group includes convolutions of two randomly generated DFA over disjoint alphabets. 
Each component of these convolutions has between 15 and 30 states. 
Among these examples, we see an equally impressive reduction in the number of equivalence queries, averaging 98\%.
The example data for both cases are presented in Table~\ref{tbl:conv1} and 
the summary statistics are in Table~\ref{tb:three-experiments}.

Finally, the third example group includes convolutions of two randomly generated DFA over alphabets 
$\Sigma_1, \Sigma_2$ resp., where the alphabets are not disjoint.
We investigated influence of the size of the intersection $\Sigma_1 \cap \Sigma_2$ on the number of inferred 
equivalence queries in the implementation based on \learnlib. 
We observed that even if $4$ letters are shared by both automata while each automaton has $1$
private letter, the reduction of the number of equivalence queries ranges between 83\% to 92\% 
 (from 43 -- 75 equivalence queries without advice to 5 -- 10 equivalence queries).
The source data are in~\cite{experimentsRepo}.

\begin{table}
    \centering
    \begin{tabular}{|c|c|c|c|c|c|c|}
        \hline
         Advice & \multicolumn{3}{|c|}{Membership  queries} & \multicolumn{3}{|c|}{Equivalence queries}\\
        \hline
         &  Min & Max & Mean & Min & Max & Mean\\
        \hline
         $\rewConv[p]$ & 47 & 83 & 61 & 91 & 98 & 94 \\ 
        \hline
         $\rewConv[r]$ & 45 & 82 & 64 & 96 & 99 & 98 \\ 
\hline
        $\rewIdm[a]$  & -9 & 20 & 5 & 0 & 29 & 15 \\
        \hline
        \end{tabular}
    \caption{The percentage decrease in query complexity with advice comparing 
    to the \lStar-algorithm without advice. Considered SRSs are for convolution of pattern DFA $\rewConv[p]$, 
    convolution of random DFA $\rewConv[r]$, and the idempotent letter $\rewIdm[a]$}
    \label{tb:three-experiments}  
\end{table}

\Paragraph{Bit-wise addition}
The minimal DFA recognizing the relation of bit-wise addition (see Section~\ref{s:bit-wise-addition}) has $3$ states. 
Our implementation of the \lStar-algorithm starts with the set $C$ that includes all the letters from the alphabet.
Therefore, the first candidate DFA is correct and hence the \lStar-algorithm 
asks $1$ equivalence query and $201$ membership queries to construct that automaton. 
The \lStar-algorithm with the SRS $\rewBin$ as advice asks only $115$ membership queries. 
Therefore, we have observed $42\%$ reduction in membership queries.

\section{Extensions}
\label{s:extensions}
We present two extensions of the framework presented in Section~\ref{s:framework}: 
an extension of SRS to a more general model of \emph{controlled string rewriting systems} and
a more flexible definition of consistency of an SRS with a language. 
Both extensions increase the expression power of the advice framework as we demonstrate with examples.

\subsection{Controlled string rewriting systems}

Controlled string rewriting systems are an extension of SRS, in which each rewriting rule comes with constraints
on positions at which it can be applied. E.g. a rule may be applied only at the first position in any word.

\Paragraph{Regular expressions} A regular expression $e$ over $\Sigma$ 
is an expressions consisting of 
the empty set $\emptyset$, the empty word $\epsilon$, 
letters $\Sigma$, concatenation $\cdot$, alternation $+$, and the Kleene star $^*$. 
The language defined by $e$, denoted by $\lang(e)$, is defined recursively as follows:,
$\lang(\emptyset) = \emptyset$,
$\lang(\epsilon) = \set{\epsilon}$,  
$\lang(a) = \set{a}$, for $a \in \Sigma$,
$\lang(e_1 \cdot e_2) = \set{wv \mid w \in \lang(e_1), v \in \lang(e_2)}$,
$\lang(e_1 + e_2) = \lang(e_1) \cup \lang(e_2)$, and
$\lang(e^*) = \bigcup_{i \geq 0} \lang(e^i)$, where $e^0 = \epsilon$ and $e^{i+1} = e \cdot e^i$.
The expression $\Sigma$ is considered as a shorted form of $a_1 + \ldots + a_n$, where 
${a_1, \ldots, a_n}$ are all letters of the alphabet.

\Paragraph{Controlled string rewriting systems} 
A \emph{controlled string rewriting system (cSRS)}
 $\rew$ over an alphabet $\Sigma$ is 
a finite set of quadruples $(l,r, e_x, e_y)$, where $l,r$ are words over $\Sigma$ and
$e_x, e_y$ are regular expressions over $\Sigma$.
For a cSRS $\rew$, we define a \emph{single-step rewrite relation} $\rewritesOneStep_{\rew}$ over 
words from $\Sigma^*$ as follows:
for all $s,t \in \Sigma^*$, we have $s \rewritesOneStep_{\rew} t$ if and only if 
there are words $x,y \in \Sigma^*$ and $(l,r,e_x,e_y) \in \rew$ such that $s = x l y$, $t = x r y$ and $x \in 
\lang(e_x), y \in \lang(e_y)$.
As for SRS, the relation $\rewrites_{\rew}$ is the transitive and reflexive closure of $\rewritesOneStep_{\rew}$. 
Observe that every SRS can be also considered as a cSRS as a rule $l \rightarrow r$ 
is semantically equivalent to $(l,r,\Sigma^*,\Sigma^*)$.

\Paragraph{cSRS advice} 
Controlled rewriting has been considered in the literature~\cite{DBLP:conf/icalp/Chottin79,DBLP:conf/frocos/JacquemardKS11,DBLP:journals/jsyml/ChvalovskyH16}.
Confluence, termination, and consistency with a regular language defined for SRS extend seamlessly to cSRS, 
and hence we can consider cSRS advice in active automata learning. Furthermore, \emph{cache} for
membership queries can be implemented in the same way for cSRS as for SRS. We discuss how to adapt
checking consistency of an SRS with $\lang(\aut)$ to cSRS (Lemma~\ref{l:compute-inconsistency-witnesses}).

\Paragraph{Equivalence queries with cSRS} 
Consider a cSRS $\rew$ over $\Sigma$ and  a minimal DFA $\aut$ over $\Sigma$ as well.
As in the poof of Lemma~\ref{l:compute-inconsistency-witnesses}, we observe that 
consistency of single-step rewriting  $\rew$ with the language $\lang(\aut)$ implies 
 consistency of $\rew$ with $\lang(\aut)$.
Now, consider $(l,r,e_x,e_y) \in \rew$ and words $s,t$ such that $s$ rewrites in one step to $t$
 and $s \in \lang(\aut) \Iff t \in \lang(\aut)$ does not hold.
Then, for some words $x,y$ we have $s = x l y$, $t = x r y$, and
$x \in \lang(e_x)$, $y \in \lang(e_y)$. 
Therefore, for every rule $(l,r,e_x,e_y)$, we need to check whether for every state $q$ reachable
with some word from $\lang(e_x)$, states $\delta_{\aut}(q,l)$ and $\delta_{\aut}(q,r)$ are
indistinguishable with any word from $\lang(e_y)$. These conditions can be verified in polynomial time in 
$|\aut|, |e_x|$ and $|e_y|$. In consequence, we have:
\begin{lemma}
Consistency of a given cSRS $\rew$ over $\Sigma$ with the language $\lang(\aut)$ of a given DFA $\aut$ over $\Sigma$ 
can be decided in polynomial time in $|\rew| + |\aut|$.  
\end{lemma}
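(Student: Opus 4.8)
The plan is to reduce the problem to a collection of reachability checks in polynomially sized product automata, with one group of checks per rewrite rule. First I would observe, as noted in the paragraph preceding this statement, that $\rew$ is consistent with $\lang(\aut)$ if and only if single-step rewriting is consistent with $\lang(\aut)$: since the biconditional $s \in \lang(\aut) \Iff t \in \lang(\aut)$ is transitive, any multi-step violation forces a single-step violation somewhere along the rewriting chain, and conversely a single step is itself a (length-one) multi-step rewrite. Hence it suffices to detect, for each rule $(l,r,e_x,e_y) \in \rew$, whether there exist $x \in \lang(e_x)$ and $y \in \lang(e_y)$ such that exactly one of $xly$ and $xry$ belongs to $\lang(\aut)$.

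Next I would recast this violation in terms of states. Reading $x$ drives $\aut$ to the state $q = \delta_{\aut}(q_0, x)$, after which $xly$ and $xry$ are accepted according to whether $\delta_{\aut}(\delta_{\aut}(q,l), y)$ and $\delta_{\aut}(\delta_{\aut}(q,r), y)$ are accepting. Writing $q_1 = \delta_{\aut}(q,l)$ and $q_2 = \delta_{\aut}(q,r)$, the rule witnesses inconsistency precisely when there is a state $q$ reachable from $q_0$ by some word in $\lang(e_x)$ for which the pair $(q_1,q_2)$ is distinguished by some word in $\lang(e_y)$, i.e. some $y \in \lang(e_y)$ makes exactly one of $\delta_{\aut}(q_1,y)$, $\delta_{\aut}(q_2,y)$ accepting.

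The algorithm then splits into two decoupled phases handling the prefix constraint $e_x$ and the suffix constraint $e_y$ separately. For the prefix phase I would build an NFA $N_x$ of size $\bigO(|e_x|)$ for $\lang(e_x)$, form the product $\aut \times N_x$ of size $\bigO(|\aut|\cdot|e_x|)$, and compute by reachability the set $R$ of states $q$ of $\aut$ for which some configuration $(q,p)$ with $p$ accepting in $N_x$ is reachable from the initial configuration; this $R$ is exactly the set of states reachable by words in $\lang(e_x)$. For each $q \in R$ I compute the pair $(\delta_{\aut}(q,l), \delta_{\aut}(q,r))$ by running the fixed words $l,r$ from $q$. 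For the suffix phase I would build an NFA $N_y$ of size $\bigO(|e_y|)$ for $\lang(e_y)$ and form the triple product $\aut \times \aut \times N_y$, whose states track the pair of states reached in two synchronized copies of $\aut$ together with the progress of $N_y$. Starting from each configuration $(q_1, q_2, p_0^y)$ obtained above, I check by reachability whether a configuration $(s_1, s_2, p)$ is reachable with $p$ accepting in $N_y$ and exactly one of $s_1, s_2$ accepting in $\aut$; such a configuration exists iff the rule witnesses inconsistency. The triple product has $\bigO(|\aut|^2 \cdot |e_y|)$ states, so each reachability computation is polynomial, and summing over all rules gives total running time polynomial in $|\rew| + |\aut|$; if no rule yields a reachable inconsistent configuration then $\rew$ is consistent with $\lang(\aut)$.

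The only subtle point, which I would flag as the part to get right rather than a genuine obstacle, is the need to track $\delta_{\aut}(q,l)$ and $\delta_{\aut}(q,r)$ simultaneously while reading the suffix $y$: distinguishability of the pair is a joint property, so the suffix phase must run two synchronized copies of $\aut$, which is the source of the $\aut \times \aut$ factor and the quadratic blow-up. Once this is in place the two constraints do not interact — the prefix merely determines which pairs $(q_1,q_2)$ must be tested, while the distinguishability test depends only on that pair and on $e_y$ — and this decoupling is exactly what keeps the whole check polynomial.
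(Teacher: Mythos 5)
Your proposal is correct and follows essentially the same route as the paper: reduce consistency to single-step violations, then characterize a violating rule $(l,r,e_x,e_y)$ by a state $q$ reachable via some word of $\lang(e_x)$ whose pair $\left(\delta_{\aut}(q,l),\delta_{\aut}(q,r)\right)$ is distinguished by some word of $\lang(e_y)$. The only difference is that you explicitly implement the polynomial-time check through products of $\aut$ (and of $\aut \times \aut$) with NFAs for $e_x$ and $e_y$, a verification step the paper asserts without detail.
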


\begin{proof}
For the implication from left to right, consider $w,v$ violating consistency, i.e., 
$w \rewrites v$, but $w \in \lang(\aut) \Iff v \in \lang(\aut)$ does not hold.
Consider a rewriting sequence $w = w_0 \rewritesOneStep w_1 \rewritesOneStep \ldots \rewritesOneStep w_k = v$. 
There is a pair $w_i, w_{i+1}$ such that $w_i \in \lang(\aut) \Iff w_{i+1} \in \lang(\aut)$ does not hold.
Thus, then there are $w',v'$ such that $w' \rewritesOneStep v'$ and $w' \in \lang(\aut) \Iff v' \in \lang(\aut)$ does not hold. 
Since $w'$ rewrites to $v'$ in one step, there is a rewrite rule $(l,r,e_x,e_y)$ in $\rew$ and we have
$w' = x l y$, and $v' = x r y$ with $x \in \lang(e_x)$ and $y \in \lang(e_y)$.
Let $q$ be a state of $\aut$ reached from the initial state upon $x$.
Clearly, $q$ is reachable over $e_x$.
Now, let $s_1, s_2$ be states reached from $q$ upon $l$ and $r$ respectively. 
Now, one of the states $\delta_{\aut}(s_1,y)$ and $\delta_{\aut}(s_2,y)$ is accepting and one is rejecting, 
and hence $s_1, s_2$ are distinguished over $e_y$.

Conversely, assume that there is a rewrite rule $(l,r,e_x,e_y) \in \rew$ and a state $q$ reachable over $e_x$. 
such that $s_1 = \delta(q,l) \neq \delta(q,r) = s_2$ are distinguished over $e_y$.
Let $x \in \lang(e_x)$ be a word upon which $\aut$ reaches $q$ and let
$y \in \lang(e_y)$ be a word distinguishing $s_1$ and $s_2$.
It follows that exactly one of the states: $\delta_{\aut}(q_0,xly)$ and $\delta_{\aut}(q_0,xry)$ is accepting. 
Therefore, words $xly$ and $xry$ satisfy $xly \rewritesOneStep xry$ and $xly \in \lang \Iff xry \in \lang$ does not hold.
\end{proof}

The reduction in query complexity depends on advice  and it can range from none to almost all. 
In particular, as we will see, a cSRS can encode the DFA recognizing the specified language.
While it is not a reasonable scenario to specify the complete DFA using an SRS 
as it makes learning unnecessary, 
providing a partial specification with an SRS is a sensible approach.
For instance, it is often clear which sequences of actions lead to an error state.
We show how to encode \emph{partial DFA}, which are a variant of DFA with undefined transitions 
(the transition function is a partial function);  
a partial DFA can be regarded as a partial specification. 

Consider a partial DFA $\autB$ over the alphabet $\Sigma$. 
We define an SRS $\rew[\autB]$ over $\Sigma$ such that for 
all words $u,v \in \Sigma^*$, if runs over $u$ and $v$ in $\autB$ are defined 
(all transitions along words $u$ and $v$ are defined) and terminate in the same state, then
$\normalForm{u} = \normalForm{v}$.
Let $S$ be a prefix-closed set of \emph{access words} for all states of $\autB$, where each state has exactly 
one corresponding word. 
Then, $\rew[\autB]$ consists of rules $(u a, u',\set{\epsilon},\Sigma^*)$, for all words $u \in S$ and $a \in \Sigma$ such that $ua \notin S$
and $u' \in S$  satisfies $\delta_{\autB}(q_0,u') = \delta_{\autB}(q_0,ua)$.
The rule $(u a, u',\set{\epsilon},\Sigma^*)$ states that $ua$ can be rewritten to $u'$, but 
only at the beginning of the word, i.e., $uay$ rewrites to $u'y$.
Clearly, for every word $w$, if the run over $w$ is defined, then 
the word $w$ rewrites to the unique $u \in S$ such that 
$\delta_{\autB}(q_0,w) = \delta_{\autB}(q_0,u)$. 

The number of rules SRS in $\rew[\autB]$ is the number of transitions of $\autB$.
One could drop the restriction of the application position of the rule $ua \rightarrow u'$
and allow rewriting at any position, but then the SRS would correspond to a congruence relation w.r.t. 
$\lang(\autB)$, which may have 
exponentially many equivalence classes w.r.t. $|Q|$ and hence the SRS could have exponential size as well.

We consider 100 random DFA with the size between 500 and 1000. 
For each random DFA $\aut$, we construct a partial DFA $\autB$ from $\aut$ by random pruning
all but 10--20 randomly chosen transitions.
Based on $\autB$, we construct the SRS $\rew[{\autB}]$,
which has between 10 and 20 rules.
We have observed $26\%$ decrease in the number of membership queries and small decrease in the number of
equivalence queries~\cite{experimentsRepo}.

\subsection{One-sided advice}
\label{s:one-sided}
The advice mechanism discussed until now is based on the equivalence $s \in \lang \Iff t \in \lang$.
In this section we consider a relaxed notion of advice, \emph{one-sided advice}, which is based on the implication.
It is easier to fit fine-grained one-sided advice consistent with a (target) language than two-sided advice.
For instance, it is often clear which sequence of actions are invalid and hence lead to an error state.
On the negative side, as we discuss below, it is computationally harder to infer information from one-sided advice.

\Paragraph{Positive and negative consistency}
To define one-sided advice, the consistency notion splits to \emph{positive} and \emph{negative consistency}.
For an SRS $\rew$ over $\Sigma$ and a language $\lang \subseteq \Sigma^*$, we say that
(a)~$\rew$ is \emph{positively consistent} with $\lang$ if for all words $s,t$, 
if $s \rewrites_{\rew} t$ , then $s \in \lang \Rightarrow t \in \lang$, and
(b)~$\rew$ is \emph{negatively consistent} with $\lang$ if for all words $s,t$, 
    if $s \rewrites_{\rew} t$, then $s \notin \lang \Rightarrow t \notin \lang$.  
Note that an SRS $\rew$ is consistent with $\lang$ if and only if it is
both positively and negatively consistent with $\lang$.

\begin{example}
We can specify with a single membership query that a language $\lang$ contains all
words with the infix $u$. 
Consider a word $u$ and an SRS $\rew_u = \set{ u \rightarrow a u, u \rightarrow u a \mid a \in \Sigma}$. 
If $\rew_u$ is positively consistent with $\lang$ and $u \in \lang$, then all words with an infix $u$ belong to $\lang$.
Similarly, if $u$ does not belong to a language $\lang$ and $\rew_u$ is negatively consistent with 
$\lang$, then no word with the infix $u$ belongs to $\lang$. 
In that case, we need a single membership query as well.
\end{example}

Now, we discuss how to implement inference from one-sided advice for equivalence and  membership queries.
We discuss only the case of positive consistency as the other case is symmetric.

\subsubsection{Equivalence queries}
Consider an SRS $\rew$ and a language $\lang$ such that $\rew$ is positively consistent with $\lang$.
Inference for equivalence queries is similar to the case of two-sided advice. It is slightly more expensive,
but still it is polynomial in $|\rew|$ and $|\aut|$. 

Let $\lang_{s}(\aut)$ be the language of words accepted by $\aut$ starting from the state $s$.
We define $\leq_{\aut}$ on states of $\aut$ as $s_1 \leq_{\aut} s_2$ if and only if 
$\lang_{s_1}(\aut) \subseteq \lang_{s_2}(\aut)$. 

\begin{lemma}
\label{l:compute-inconsistency-witnesses-one-sided}
Let  $\rew$ be an SRS over $\Sigma$ and $\aut$ be a DFA over $\Sigma$. 
The SRS $\rew$ is positively consistent (resp., negatively consistent) with $\lang$ if and only if 
for all rewrite rules $l \rightarrow r \in \rew$ and every state $q$ of $\aut$ we have
$\delta_{\aut}(q,l) \leq_{\aut} \delta_{\aut}(q,r)$ 
(resp., $\delta_{\aut}(q,r) \leq_{\aut} \delta_{\aut}(q,l)$).
\end{lemma}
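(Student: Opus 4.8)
The plan is to adapt the proof of Lemma~\ref{l:compute-inconsistency-witnesses} to the one-sided setting, replacing the symmetric distinguishability condition by the asymmetric language-containment order $\leq_{\aut}$. I will treat only positive consistency, since the negative case follows by the evident symmetry (swapping the roles of $l$ and $r$, equivalently replacing $\lang(\aut)$ by its complement). The overall structure is the same two-direction equivalence: first reduce full rewriting $\rewrites$ to single-step rewriting $\rewritesOneStep$, then analyze one single-step application in terms of states of $\aut$.

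First I would prove the direction from right to left (the condition implies positive consistency). Assume that for every rule $l \rightarrow r \in \rew$ and every state $q$ we have $\delta_{\aut}(q,l) \leq_{\aut} \delta_{\aut}(q,r)$. It suffices to show that each single step preserves membership, i.e. if $s \rewritesOneStep t$ and $s \in \lang(\aut)$, then $t \in \lang(\aut)$; positive consistency for the full relation $\rewrites$ then follows by induction on the length of the rewriting sequence, using reflexivity and transitivity. For a single step write $s = xly$ and $t = xry$ with $l \rightarrow r \in \rew$. Let $q = \delta_{\aut}(q_0,x)$, $s_1 = \delta_{\aut}(q,l)$ and $s_2 = \delta_{\aut}(q,r)$. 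The hypothesis gives $s_1 \leq_{\aut} s_2$, i.e. $\lang_{s_1}(\aut) \subseteq \lang_{s_2}(\aut)$. Since $s \in \lang(\aut)$ means $y \in \lang_{s_1}(\aut)$, containment yields $y \in \lang_{s_2}(\aut)$, which is exactly $t \in \lang(\aut)$.

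For the converse, I would argue contrapositively. Suppose the state condition fails: there is a rule $l \rightarrow r \in \rew$ and a state $q$ with $s_1 = \delta_{\aut}(q,l) \not\leq_{\aut} s_2 = \delta_{\aut}(q,r)$. By definition of $\leq_{\aut}$ this means $\lang_{s_1}(\aut) \not\subseteq \lang_{s_2}(\aut)$, so there is a word $y$ with $y \in \lang_{s_1}(\aut)$ but $y \notin \lang_{s_2}(\aut)$; that is, $\delta_{\aut}(s_1,y)$ is accepting while $\delta_{\aut}(s_2,y)$ is rejecting. Choosing any word $x$ reaching $q$ from $q_0$, the words $xly$ and $xry$ satisfy $xly \rewritesOneStep xry$ with $xly \in \lang(\aut)$ but $xry \notin \lang(\aut)$, witnessing failure of positive consistency.

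The main structural point to handle carefully, and the step I expect to need the most attention, is the reduction from the full rewriting relation to a single step. In the two-sided lemma this relied on the fact that the negation of an equivalence along a chain forces the negation at some single link; here the implication is directional, so I must instead observe that the composition of membership-preserving single steps is again membership-preserving, which is immediate once single-step preservation is established but must be phrased as a genuine induction rather than a pigeonhole over the chain. Note also that, unlike Lemma~\ref{l:compute-inconsistency-witnesses}, minimality of $\aut$ is not needed: the order $\leq_{\aut}$ is defined directly via language containment $\lang_{s}(\aut)$ rather than via the existence of a distinguishing word, so no appeal to Myhill--Nerode is required. One should remark that $\leq_{\aut}$ is a preorder (reflexive and transitive) but need not be antisymmetric on a non-minimal DFA, which is harmless for the argument.
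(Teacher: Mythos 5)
Your proof follows the approach the paper intends for this lemma (the paper gives no separate proof, deferring to the proof of Lemma~\ref{l:compute-inconsistency-witnesses}): reduce full rewriting to single-step rewriting, analyze one step through the states $q$, $\delta_{\aut}(q,l)$, $\delta_{\aut}(q,r)$, and in the only-if direction build the witness pair $xly \rewritesOneStep xry$, obtaining the separating word $y$ directly from the definition of $\leq_{\aut}$ rather than from Myhill--Nerode. Both directions are sound as written, and your symmetry argument for negative consistency is fine. A small aside: the pigeonhole over the rewriting chain also survives the passage to one-sided consistency --- if $x_0 \in \lang(\aut)$ and $x_k \notin \lang(\aut)$, then some link has $x_i \in \lang(\aut)$ and $x_{i+1} \notin \lang(\aut)$ --- so your direct induction is a stylistic choice, not a necessity.

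The one place where you overreach is the closing remark that minimality of $\aut$ is not needed. What is true is that the ``no two equivalent states'' half of minimality is not needed, since $\lang_{s_1}(\aut) \not\subseteq \lang_{s_2}(\aut)$ hands you the word $y$ directly. But your step ``choosing any word $x$ reaching $q$ from $q_0$'' silently assumes that $q$ is reachable, and reachability is genuinely required in the only-if direction: for a DFA with an unreachable state $q$ at which $\delta_{\aut}(q,l) \not\leq_{\aut} \delta_{\aut}(q,r)$, the SRS can still be positively consistent with $\lang(\aut)$, because consistency quantifies over words and words only visit reachable states; so the equivalence fails for such automata. Your proof (and, to be fair, the lemma as stated, which also drops the minimality hypothesis present in Lemma~\ref{l:compute-inconsistency-witnesses}) is correct under the implicit assumption that every state of $\aut$ is reachable --- harmless in context, since the candidate automata produced by the \lStar-algorithm are minimal, but this should be stated as a hypothesis rather than claimed to be unnecessary.
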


The above lemma is similar to Lemma~\ref{l:compute-inconsistency-witnesses}, but
the condition $\delta_{\aut}(q,l) = \delta_{\aut}(q,r)$
 is replaced with $\delta_{\aut}(q,l) \leq_{\aut} \delta_{\aut}(q,r)$ for positive consistency and
 $\delta_{\aut}(q,r) \leq_{\aut} \delta_{\aut}(q,l)$ for negative consistency.
We can compute in $O(|\aut|^3)$ the set of all pairs in the relation $\leq_{\aut}$.
This algorithm can be further optimized  with the information from the \lStar-algorithm. We discuss that in the following remark.

\begin{remark}[Computing the relation $s_1 \leq_{\aut} s_2$  with the information from the \lStar-algoroithm]
We sketch the algorithm computing the subsumption relation. It starts with the set of all pair of states 
$X_0 = Q \times Q$. Then in each iteration $i$, for all pairs $(s_1, s_2) \in X_i$ it checks 
whether for all $a \in \Sigma$, states $(s_1, s_2)$ moved by $a$ are in $X_i$, i.e.,
$(\delta(s_1,a), \delta(s_2,a)) \in X_i$. The set $X_{i+1} \subseteq X_i$ consists of all pair	s that satisfy 
this condition. The algorithm stops when $X_i = X_{i+1}$. 

Observe that this algorithm can be bootstrapped with the information from the \lStar-algorithm.
The input automaton $\aut$ is the one constructed by the \lStar-algorithm, that is, it is $\aut_{S,C}$ 
for some sets of words $S,C$.
Therefore, we can start with $X_0$ being the set of pairs $(s_1, s_2)$ such that 
for all $c \in C$ we have $s_1 c \in \lang \implies s_2 c \in \lang$. 
In our implementation of the \lStar-algorithm (and various other implementations), for all states $s \in S$ and 
all $c \in C$, the information whether $s c \in \lang$ is already computed for the construction of $\aut_{S,C}$.
Note, however, the condition: for all words $u \in C$ we have 
 $u \in \lang_{s_1}(\aut) \implies u \in \lang_{s_2}(\aut)$, does not imply that 
 $\lang_{s_1}(\aut) \subseteq \lang_{s_2}(\aut)$.
To see that cosider $\lang = \lang((a^3)^* + b(a^2 + a^2b))$.
Let $S = \{\epsilon, a, aa, b, ba, baa, baab, baaba\}$ and 
$C = \{ \epsilon, a, aa, aaa, b, ab, aab, aaab\}$.
Observe that $\aut_{S,C}$ recognizes $\lang$. However,
for the state corresponding to $a$, the only word $c \in C$ such that
$a c \in \lang$ is $aa$, while for the state that corresponds to $b$ there are two such words
$aa$ and $aab$ ($baa, baab \in \lang$). 
Yet, $\lang_a(\aut_{S,C}) = \lang(aa (a^3)^*)$ and $\lang_b(\aut_{S,C}) = \set{ aa, aab}$.
\end{remark}

As in Lemma~\ref{l:compute-inconsistency-witnesses}, having 
 a counterexample $q,s_1 = \delta_{\aut}(q,l),s_2 = \delta_{\aut}(q,r)$ to positive (resp,. negative) consistency, we can construct two 
 words for which one is the counterexample to the equivalence query. The construction is essentially the same:
 let $x,y$ be words such that $\delta(q_0, x) = q$, and $y$ be a word 
 in $\lang_{s_1}(\aut) \setminus \lang_{s_2}(\aut)$.
 Then, one of the words $xly, xry$ is the counterexample to the equivalence of $\lang(\aut)$ and $\lang$.

\subsubsection{Membership queries}
The answer to the membership query $x \in \lang$ can be inferred in two cases:
(a)~if the algorithm knows $y \in \lang$ such that $y \rewrites x$, then hence $x \in \lang$, and
(b)~if the algorithm knows $y \notin \lang$ such that $x \rewrites y$, then $x \notin \lang$. 
However, knowing whether $y \in \lang$ holds and $\normalForm{x} = \normalForm{y}$
is insufficient to deduce whether $x$ belongs to $\lang$. 
This makes inference computationally expensive for two reasons.

First, computing the normal from is typically easier than deciding $x \rewrites y$.
In the former, for a convergent $\rew$ it suffices to apply rewrite rules arbitrarily, since 
confluence guarantees the uniqueness of the resulting word. 
In contrast, to decide $x \rewrites y$ the choice of reductions is important.
There are examples of convergent SRSs, in which the normal form can be computed easily, 
but deciding $x \rewrites y$ is difficult (e.g. $\PSPACE$-hard).

Second, even if deciding $x \rewrites y$ is not difficult, the algorithm would need to test the queried word $x$ 
against all words in the cache, which is considerably more expensive than a single dictionary look-up.
Still, for specific SRSs we can implement dedicated efficient procedures.

\begin{example}
\newcommand{\rewUp}{\rew_{\uparrow}}
A language $\lang$ is \emph{upward-closed}, if for every word $x \in \lang$, for all words $y$ that contain $x$ 
as a subsequence belong to $\lang$~\cite{DBLP:journals/tcs/KarandikarNS16}.
We can express that the learned language $\lang$ is upward-closed with the following SRS 
$\rewUp = \{\epsilon \rightarrow a \mid a \in \Sigma \}$. 

Observe that for $\rewUp$ we can implement inference for membership queries efficiently. 
Basically, all words that are known to be in the learned language are stored in the suffix tree, and 
before asking the membership query to the oracle, the algorithm first checks using the (generalized) suffix tree, 
whether a given word has a subsequence in a set of words. In the worst case, the algorithm needs 
to traverse the whole suffix tree, which is still better than checking all words separately.
\end{example}

\section{Discussion and future work}
\label{s:future}
We have proposed an extended automata learning framework, which combines prior partial specification given via 
a (controlled) string rewriting system and learning from queries. 
We have discussed that partial specifications are often evident for the language under learning and 
hence producing them does not take much effort.
The proposed method achieves promising results on synthetic data, which are motivated by real-life scenarios.
We plan on applying the proposed framework to model learning and synthesis of circuits.

\bibliography{papers}

\end{document}